\documentclass[12pt]{article}
\usepackage[utf8]{inputenc}
\usepackage{geometry}
\geometry{verbose}
\usepackage{float}
\usepackage{url}
\usepackage{amsthm}
\usepackage{amsmath}
\usepackage{amssymb}
\usepackage{graphicx}
\usepackage{esint}
\usepackage[authoryear]{natbib}

\makeatletter


  \theoremstyle{definition}
  \newtheorem{defn}{\protect\definitionname}
\theoremstyle{plain}
\newtheorem{thm}{\protect\theoremname}
  \theoremstyle{remark}
  \newtheorem{rem}{\protect\remarkname}
  \theoremstyle{plain}
  \newtheorem{cor}{\protect\corollaryname}
 \theoremstyle{definition}
  \newtheorem{example}{\protect\examplename}

\@ifundefined{date}{}{\date{}}
\makeatother

  \providecommand{\definitionname}{Definition}
  \providecommand{\examplename}{Example}
  \providecommand{\remarkname}{Remark}
\providecommand{\corollaryname}{Corollary}
\providecommand{\theoremname}{Theorem}

\begin{document}

\title{On Optimal Reinsurance Policy with Distortion Risk Measures and Premiums}

\author{Hirbod Assa%
\thanks{Institute for Financial and Actuarial Mathematics, University of Liverpool.
email: \protect\url{assa@liverpool.ac.uk}%
}}
\maketitle
\begin{abstract}
\noindent In this paper, we consider the problem of optimal reinsurance
design, when the risk is measured by a distortion risk measure and
the premium is given by a distortion risk premium. First, we show
how the optimal reinsurance design for the ceding company, the reinsurance
company and the social planner can be formulated in the same way.
Second, by introducing the “marginal indemnification functions”, we
characterize the optimal reinsurance contracts. We show that, for
an optimal policy, the associated marginal indemnification function
only takes the values zero and one. We will see how the roles of the
market preferences and premiums and that of the total risk are separated.
\end{abstract}

\section{Introduction}

The problem of optimal reinsurance design lies at the heart of reinsurance
studies. A reinsurance policy is a contract, according to which part
of the risk of an insurance company (the ceding company) is transferred
to another insurance company (the reinsurance company), in exchange
for receiving a premium. Different reinsurance contracts have been
introduced in the reinsurance market among which the quota-share,
stop-loss, stop-loss after quota-share and quota-share after stop-loss
have received more attention due to their appealing optimality properties.
\citet{Borch:1960} (and also \citet{Arrow:1963}) showed that, subject
to a budget constraint, the stop-loss policy is an optimal reinsurance
contract for the ceding company when the risk is measured by variance
(or by a utility function). Recent extensions of the same problem
have been studied in \citet{Kaluszka:2001}, \citet{Young:1999},
\citet{Okolewski/Kaluszka:2008}.

The problem of optimal reinsurance design has been studied by using
risk measures and risk premiums, due to their development and application
in finance and insurance. For instance, in a framework where the ceding
company’s risk is measured either by Value at Risk (VaR) or Conditional
Tail Expectation (CTE)%
\footnote{It can be shown that for continuous distributions, CTE is equal to
the Conditional Value at Risk (CVaR), which will be introduced later
in this paper.%
}, with the Expected Value Premium Principle as the risk premium, \citet{Cai/Tan:2007}
found the optimal retention levels. Later, in the same framework,
\citet{Cai/Tan/Weng/Zhang:2008} showed that the stop-loss and the
quota-share are the most optimal reinsurance contracts. In \citet{Bernard/Tian:2009}
also, the authors have considered optimal risk management strategies
of an insurance company subject to regulatory constraints when the
risk is measured by VaR and CVaR. In recent years, researchers have
tried to extend the optimal reinsurance design problem to larger families
of risk measures and risk premiums. For instance, \citet{Cheung:2010}
and \citet{Chi/Tan:2013} have extended the problem by using a family
of general risk premiums; in these two papers the risk of the ceding
company is measured either by VaR or CTE. On the other hand, \citet{Cheung/Sung/Yam/Yung:2014}
have extended the problem by using general law-invariant convex risk
measures, whereas the risk premium is considered to be the Expected
Value Premium Principle. In the existing literature, either only the
family of risk measures or only the family of risk premiums is extended,
while in many applications it is desirable to extend both at the same
time.

The present paper, considers a framework which extends, at the same
time, the set of risk measures and the risk premiums to the family
of distortion risk measures and premiums. First, we show that in this
framework the ceding, the reinsurance and the social planner problems
can be formulated in the same way. Second, we characterize the optimal
solutions by introducing the notion of marginal indemnification function.
A marginal indemnification function is the marginal rate of changes
in the value of a reinsurance contract. We show that any optimal solution
to the reinsurance problem has a marginal indemnification function
which only takes the values zero and one. Remarkably, we can separate
the roles of the market preferences and premiums and that of the total
risk are separated. Finally, we have to point out that, by using a
very simple fact that any Lipschitz continuous function has a derivative
that is bounded by its Lipschitz constant, we introduce a useful technique
in this paper that can generalize many already existing results to
the distortion risk measures and risk premiums.

It is worth mentioning that the families of distortion risk measures
and risk premiums contain very important particular cases; for instance,
the family of the co-monotone sub-additive law invariant coherent
risk measures es (\citet{Kusuoka:2001}), the family of generalized
spectral risk measures (\citet{Cont/Deguest/Scandolo:2010}), the
generalized distortion measures of risk (\citet{Wang:1995}), Wang’s
risk premiums (\citet{Wang:1995} and \citet{Wang/Young/Panjer:2006}),
Expected Value Premium Principle and many others.

The rest of the paper is organized as follows: Section 2 introduces
the mathematical notions and notations that we use in this paper.
In Section 3 the general set-up of the ceding, the reinsurance and
the social planner problems will be presented. In Section 4 the results
on characterizing the optimal reinsurance contracts will be presented.
In Section 5 we provide some corollaries and examples.

\section{Preliminaries and Notations}

Let $(\Omega,P,\mathcal{F})$ be a probability space, where $\Omega$
is the ``states of the nature'', $P$ is the physical probability
measure and $\mathcal{{F}}$ is the $\sigma$- field of measurable
subsets of $\Omega$. Let $p,q\in[1,\infty]$ be two numbers such
that $1/p+1/q=1$. We denote the space of all random variables with
$p$'th finite moment with $L^{p}$, i.e., $L^{p}=\{X:\Omega\to\mathbb{R}:{E}\left(\left|X\right|^{p}\right)<\infty\}$,
where ${E}$ denotes the expectation w.r.t $P$. Recall that according
to the Riesz Representation Theorem, $L^{q}$ is the dual space of
$L^{p}$ when $p\neq\infty$. Recall also that the space $L^{p}$
is endowed with two topologies, first the norm topology induced by
$\Vert X\Vert_{p}=E(\vert X\vert^{p})^{\frac{1}{p}}$, and second
the weak topology, induced by $L^{q}$ i.e. the coarsest topology
in which all members of $L^{q}$ are continuous. The set of all random
variables on $\Omega$ is denoted by $L^{0}$. If instead of $\Omega$
we used $\mathbb{R}$ the same definitions and statements hold.

In this paper, we consider only two period of time, $0$ and $T$,
where $0$ represents the beginning of the year, when a contract is
written, and $T$ represents the end of the year, when liabilities
are settled. Every random variable represents losses at time $T$.
For any $X\in L^{P}$, the cumulative distribution function associated
with $X$ is denoted by $F_{X}$.

\subsection{Distortion Risk Measures}

Let $g$ be a non-decreasing real function from $[0,1]$ to $[0,1]$
such that $g(0)=1-g(1)=0$. A distortion risk measure $\varrho$ (see
for example \citet{Wang/Young/Panjer:2006} and \citet{Guerard:2010})
is a mapping from $L^{p}$ to the set of real numbers $\mathbb{R}$
and is introduced as%
\footnote{This form is a bit more general than the usual definition of a distortion
risk measure, since we assumed for the moment that $X$ could be any
member of $L^{p}$. Actually, this is a particular form of the Choquet
integral, when the capacity $v$ is given by $v(S)=g(P(S))$, for
any measurable set $S\in\mathcal{F}$.%
}

\begin{equation}
\varrho(X)=\int\limits _{-\infty}^{0}\left(g(S_{X}(t))-1\right)dt+\int\limits _{0}^{\infty}g(S_{X}(t))dt,\label{eq:distortion}
\end{equation}
where $S_{X}=1-F_{X}$ is the survival function associated with $X$.
In the literature, $g$ is known as the distortion function. A more
convenient representation for a distortion risk measure can be found
in terms of Value at Risk. Let $\Pi(x)=1-g(1-x)$. By a simple change
of variable, one can see that the distortion form \eqref{eq:distortion}
can be represented as 
\begin{equation}
\varrho(X)=\int_{0}^{1}\mathrm{\mathrm{VaR}}_{t}(X)d\Pi(t),\label{eq:Choquet_form}
\end{equation}
where 
\[
\mathrm{\mathrm{VaR}}_{\alpha}(X)=\inf\{x\in\mathbb{R}|P(X>x)\le1-\alpha\},\alpha\in[0,1].
\]
In the sequel, the risk measure $\varrho$ above is denoted by $\varrho_{\Pi}$
to show its connection with $\Pi$. A popular example of a distortion
risk measure is Value at Risk, introduced earlier, where $\Pi(t)=1_{[\alpha,1]}(t)$.
A Conditional Value at Risk (CVaR) is a distortion risk measure whose
distortion function is given by $\Pi(t)=\frac{t-\alpha}{1-\alpha}1_{[\alpha,1]}(t)$,
and can be represented as 
\begin{equation}
\mathrm{CVaR}_{\alpha}(X)=\frac{1}{1-\alpha}\int_{\alpha}^{1}\mathrm{\mathrm{VaR}}_{t}(X)dt.\label{eq:CVaR}
\end{equation}

The family of spectral risk measures which was introduced first in
\citet{Acerbi:2002}, has the same representation as \eqref{eq:Choquet_form},
where $\Pi$ is also convex. One can readily see that $\varrho_{\Pi}$
is positive homogeneity, translation invariant, monotone, law invariance
and comonotonic additive. It can be shown that all law-invariant co-monotone
additive coherent risk measures can be represented as \eqref{eq:Choquet_form};
see \citet{Kusuoka:2001}. A risk measure in the form \eqref{eq:Choquet_form}
is important from different perspectives. First of all it makes a
link between the risk measures theory and the behavioral finance as
the form \eqref{eq:Choquet_form} is a particular form of Choquet
utility. Second, \eqref{eq:Choquet_form} contains a family of risk
measures which are statistically robust. In \citet{Cont/Deguest/Scandolo:2010}
it is shown that a risk measure $\varrho(x)=\int_{0}^{1}\mathrm{\mathrm{VaR}}_{t}(x)d\Pi(t)$
is robust if and only if the support of $\varphi=\frac{d\Pi(t)}{dt}$
(the derivative is in general a distribution and not a function) is
away from zero and one. For example Value at Risk is a risk measure
with this property.

For more reading on distortion risk measures one can see \citet{Guerard:2010},
\citet{Wu/Zhou:2006}, \citet{Balbas/Garrido/Mayoral:2009} and \citet{Wang/Young/Panjer:2006}.

\subsection{Distortion Risk Premiums}

A risk premium in general is introduced as a continuous mapping on
$L_{+}^{p}$ which maps any loss variable to a number representing
its premium. A general definition for the risk premium in the literature
is proposed by \citet{Wang/Young/Panjer:2006} in an axiomatic manner.
\citet{Wang/Young/Panjer:2006} characterize the family of cash invariant,
positive homogeneous, co-monotone additive risk premiums which satisfy
the following continuity property

\[
\lim_{d\to\infty}\pi(X\wedge d)=\pi(X)\text{ and }\lim_{d\to0}\pi((X-d)_{+})=\pi(X),
\]
as 
\begin{equation}
\pi(X)=\int\limits _{0}^{\infty}g(S_{X}(t))dt,\label{eq:premium_general}
\end{equation}
where $g:[0,1]\to[0,1]$ is a non-decreasing function such that $g(0)=0$
and $g(1)=1$. When the function $g$ is convex the premium is called
Wang's Premium Principle.

By similar change of variable for risk measures (i.e., $\Pi(x)=1-g(1-x)$),
the following equality holds for a premium represented in \eqref{eq:premium_general}

\begin{equation}
\pi(X)=\int\limits _{0}^{1}\mathrm{VaR}_{t}(X)d\Pi(t).\label{eq:premium_tobolev}
\end{equation}

\begin{defn}
Let $\Pi:[0,1]\rightarrow[0,1]$ be a non-decreasing function such
that $\Pi(0)=\Pi(1)-1=0$. The distortion premium $\pi_{\Pi}$ is
introduced as 
\begin{equation}
\pi_{\Pi}(X)=\int_{0}^{1}\mathrm{\mathrm{VaR}}_{t}(X)d\Pi(t).\label{spectral}
\end{equation}

\end{defn}
A popular example of a distortion risk premium is a Wang's premium%
\footnote{This premium was first introduced by Wang, however in general if $\Pi$
is convex we also call a distortion premium a Wang premium.%
} introduced by the following distortion function known as Wang's transformation

\begin{equation}
g_{\beta}(x)=\Phi(\Phi^{-1}(x)+\beta),\label{eq:Wang}
\end{equation}
where $\beta\in\mathbb{R}$ is a real number and $\Phi$ is the CDF
of the normal distribution with the mean equal to zero and the standard
deviation equal to one.

\section{Problem Set-up}

In this section, we set up the optimal reinsurance design problem
for the ceding company, the reinsurance company and the social planner%
\footnote{The term ``social planner'' is from economics literature. %
}, and, show that they can be expressed in a unifying framework. Although,
we do not need to introduce the ceding and the reinsurance companies,
as they are known in the literature of actuarial science, we introduce
briefly the social planner: the social planner is a decision-maker
who is concerned with maximizing the total welfare, which in our setting
is minimizing the total risk. Social planner is different than the
regulator, as they have different concerns, for example a regulator
would ask companies to hold a minimum capital reserve even if it goes
against the maximality of the economic welfare. 

Let us denote the annual risk of an insurance company by a non-negative
loss variable $X_{0}$. In general, a reinsurance company will accept
to cover part of the risk, in exchange for receiving a premium. Let
us denote the part of the risk covered by the re-insurer by $X^{R}$
and the part covered by the insurance company by $X^{I}=X_{0}-X^{R}$.
We assume that $0\le X^{R}\le X_{0}$. The premium received by the
reinsurance company is $(1+\rho)\pi(X^{R})$, where $\rho\ge0$ is
a relative safety loading. Therefore, the global loss of the ceding
company, denoted by $T^{I}$, can be expressed as

\begin{equation}
T^{I}=X^{I}+(1+\rho)\pi\left(X^{R}\right).\label{eq:total_loss_general}
\end{equation}
In the same manner for the re-insurer we have

\begin{equation}
T^{R}=X^{R}-(1+\rho)\pi\left(X^{R}\right),\label{eq:total_loss_general_reinsurer}
\end{equation}

We introduce three problems as follows
\begin{itemize}
\item \textbf{Ceding problem.} The ceding company minimizes its total risk
by solving the following problem 
\[
\min\limits _{0\le X^{I}\le X_{0}}\varrho^{I}\left(X^{I}+(1+\rho)\pi\left(X^{R}\right)\right),
\]
 where $\varrho^{I}$ is a risk measure used by the insurance company.
\item \textbf{Reinsurance problem.} The re-insurer tries to find the best
$X^{R}$ which minimizes its total risk 
\[
\min\limits _{0\le X^{R}\le X_{0}}\varrho^{R}\left(X^{R}-(1+\rho)\pi\left(X^{R}\right)\right),
\]
where $\varrho^{R}$ is a risk measure used by the reinsurance company. 
\item \noindent \textbf{Social planner problem.} The social planner is concerned
with the best allocation $(X^{I},X^{R})$ minimizing the economy's
total risk 
\[
\min\limits _{0\le X^{R}\le X_{0}}\left\{ \varrho^{R}\left(X^{R}-(1+\rho)\pi\left(X^{R}\right)\right)+\varrho^{I}\left(X^{I}+(1+\rho)\pi\left(X^{R}\right)\right)\right\} .
\]

\end{itemize}
\noindent By cash-invariance property of distortion risk measures,
we can rewrite these problems as follows 
\begin{itemize}
\item \noindent \textbf{Ceding problem.} 
\end{itemize}
\noindent 
\begin{equation}
\min\limits _{0\le X^{I}\le X_{0}}\varrho^{I}\left(X^{I}\right)+(1+\rho)\pi\left(X_{0}-X^{I}\right).\label{eq:ceding}
\end{equation}

\begin{itemize}
\item \textbf{Reinsurance problem.} 
\begin{equation}
\min\limits _{0\le X^{R}\le X_{0}}\varrho^{R}\left(X^{R}\right)-(1+\rho)\pi\left(X^{R}\right).\label{eq:reinsurer}
\end{equation}

\item \textbf{Social planner  problem.} 
\begin{equation}
\min\limits _{0\le X^{R}\le X_{0}}\varrho^{R}\left(X^{R}\right)+\varrho^{I}\left(X^{I}\right).\label{eq:regulator}
\end{equation}

\end{itemize}
Now we assume that both the ceding and retained loss variables are
non-decreasing functions of the global loss variable. This assumption
rules out the risk of moral hazard, as both sides have to feel any
increase in the global loss (see\citet{Heimer:1989} and \citet{Bernard/Tian:2009}
for further discussions on moral hazard and insurance). Therefore,
the set of admissible ceding loss functions is defined as

\[
C=\{0\le f(x)\le x|f(0)=0,f(x)\text{ and }x-f(x)\text{ are non-decreasing}\}.
\]
The set $C$ is known as the space of \textit{indemnification functions}.
\begin{thm}
Let us assume that $X^{R}=f(X_{0})$ where $f\in C$. The ceding,
reinsurance and the social planner  problems can be unified in the
following general framework 
\begin{equation}
\min\limits _{f\in C}a_{1}\Lambda_{1}(X_{0}-f(X_{0}))+a_{2}\Lambda_{2}(f(X_{0})),\label{eq:general_form_moral}
\end{equation}
where $\Lambda_{i},i=1,2$ are either distortion risk measure or premium
and $a_{i},i=1,2$ are positive numbers.\end{thm}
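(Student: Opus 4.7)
The plan is to treat each of the three optimisation problems \eqref{eq:ceding}, \eqref{eq:reinsurer}, \eqref{eq:regulator} in turn, substitute $X^{R}=f(X_{0})$ and $X^{I}=X_{0}-f(X_{0})$ with $f\in C$, and exhibit in each case an explicit choice of $a_{1},a_{2}>0$ and $\Lambda_{1},\Lambda_{2}$ (each being a distortion risk measure or a distortion premium) for which \eqref{eq:general_form_moral} is equivalent to the original problem (equivalent meaning equal up to an additive constant that does not depend on $f$).

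Two of the three cases are immediate. For the ceding problem \eqref{eq:ceding}, the substitution gives directly the form \eqref{eq:general_form_moral} with $a_{1}=1$, $\Lambda_{1}=\varrho^{I}$, $a_{2}=1+\rho$, $\Lambda_{2}=\pi$. For the social planner problem \eqref{eq:regulator}, one takes $a_{1}=a_{2}=1$, $\Lambda_{1}=\varrho^{I}$ and $\Lambda_{2}=\varrho^{R}$; nothing further is required.

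The only delicate case is the reinsurance problem \eqref{eq:reinsurer}, whose objective reads $\varrho^{R}(f(X_{0}))-(1+\rho)\pi(f(X_{0}))$ and thus carries a negative coefficient incompatible with the template. The main idea here is to exploit the comonotone additivity of distortion premiums. Since $f\in C$, both $f$ and the identity minus $f$ are non-decreasing on $\mathbb{R}_{+}$, so the pair $(f(X_{0}),X_{0}-f(X_{0}))$ is comonotone; as $\pi$ is comonotone additive we get
\begin{equation*}
\pi(X_{0}) \;=\; \pi(f(X_{0})) + \pi(X_{0}-f(X_{0})),
\end{equation*}
whence $-(1+\rho)\pi(f(X_{0}))=(1+\rho)\pi(X_{0}-f(X_{0}))-(1+\rho)\pi(X_{0})$. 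The last term is a constant independent of $f$, so after discarding it the reinsurance problem is equivalent to
\begin{equation*}
\min_{f\in C}\;\varrho^{R}(f(X_{0})) + (1+\rho)\,\pi(X_{0}-f(X_{0})),
\end{equation*}
which matches \eqref{eq:general_form_moral} with $a_{1}=1+\rho$, $\Lambda_{1}=\pi$, $a_{2}=1$, $\Lambda_{2}=\varrho^{R}$.

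The main (and essentially only) obstacle is the sign in the reinsurance problem; once the comonotonicity of $(f(X_{0}),X_{0}-f(X_{0}))$ is noticed, comonotone additivity of distortion premiums, together with cash invariance already used to reach \eqref{eq:ceding}--\eqref{eq:regulator}, closes the argument with no further computation.
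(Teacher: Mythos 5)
Your proposal is correct and follows essentially the same route as the paper: the ceding and social planner cases are read off directly, and the reinsurance case is handled by converting $-(1+\rho)\pi(f(X_{0}))$ into $(1+\rho)\pi(X_{0}-f(X_{0}))$ minus a constant. The only cosmetic differences are that the paper derives the identity $\pi(f(X_{0}))=\pi(X_{0})-\pi(X_{0}-f(X_{0}))$ explicitly from the $\mathrm{VaR}$ representation (i.e., it proves the comonotone additivity you invoke) and re-parametrizes via $k(x)=x-f(x)$ rather than swapping the roles of $\Lambda_{1}$ and $\Lambda_{2}$.
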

\begin{proof}
The problem \eqref{eq:ceding} is naturally in the form \eqref{eq:general_form_moral},
if $a_{1}=1\text{ and }a_{2}=1+\rho$. Note that if $f\in C$ then
$x\mapsto x-f(x)$ also belongs to $C$. Therefore, \eqref{eq:regulator}
can be written in the form \eqref{eq:general_form_moral} if $a_{1}=a_{2}=1$.
Now, we only need to show that \eqref{eq:reinsurer} can be written
in the form of \eqref{eq:general_form_moral}. Let $k(x)=x-f(x)$.
From construction, we know that $k$ belongs to $C$ and that $X^{I}=k(X_{0})$.
It is known that $\mathrm{VaR}$ commutes with non-decreasing functions,
given this fact 
\begin{multline*}
\pi\left(X^{R}\right)=\pi(f(X_{0}))=\int_{0}^{1}\mathrm{VaR}_{t}\left(f(X_{0})\right)d\Pi_{\pi}(t)\\
=\int_{0}^{1}f(\mathrm{VaR}_{t}(X_{0}))d\Pi_{\pi}(t)=\int_{0}^{1}\mathrm{(VaR}_{t}(X_{0})-k(\mathrm{VaR}_{t}(X_{0})))d\Pi_{\pi}(t)\\
=\pi(X_{0})-\int_{0}^{1}k(\mathrm{VaR}_{t}(X_{0}))d\Pi_{\pi}(t)=\pi(X_{0})-\int_{0}^{1}\mathrm{VaR}(k(X_{0}))d\Pi_{\pi}(t)\\
=\pi(X_{0})-\pi\left(X^{I}\right)=\pi\left(X_{0}\right)-\pi\left(X_{0}-X^{R}\right).
\end{multline*}
Therefore, the reinsurance problem can be written as
\begin{equation}
\min\limits _{k\in C}\left\{ \varrho^{R}\left(X_{0}-k(X_{0})\right)+(1+\rho)\pi\left(k(X_{0})\right)\right\} -(1+\rho)\pi(X_{0}).\label{eq:reinsurer_1}
\end{equation}
Except the second part, which is a constant number, the first part
is in the form of \eqref{eq:general_form_moral}.
\end{proof}
Before moving on further into our discussions, we would like to remark
on some facts regarding the ceding, reinsurance and the social planner
 problems. 
\begin{rem}
Observe that if both the ceding and the reinsurance company use the
same risk measure $\varrho$, by using the fact that $\mathrm{VaR}$
commutes with non-deceasing functions, we have
\end{rem}
\[
\varrho(X_{0}-f(X_{0}))+\varrho(f(X_{0}))=\varrho(X_{0}),\forall f\in C.
\]

This means, no matter what contract the ceding and reinsurance companies
use, as far as there is no risk of moral hazard in the reinsurance
market, the total risk remains constant. This is likely if the regulator
imposes a unique risk measure to be used by all companies, for example
the same $\mathrm{VaR}_{0.995}$, for the capital reserve. On the
other hand, this fact has a serious implication, that, if ceding company
minimizes his/her global risk by using a contract $f(X_{0})$, the
same contract will maximize the reinsurance company's global risk.
And even further, if $X^{I}=X_{0}-f(X_{0})$ minimizes the risk of
the ceding company, $X^{R}=X_{0}-f(X_{0})$ will also minimize the
risk of the reinsurance company. This is in contrast with reciprocal-reinsurance-treaties.
\begin{rem}
It is known that $\mathrm{VaR}_{\alpha}$ is co-monotone additive,
which implies that any distortion risk measure or premium is also
co-monotone additive. This implies that every distortion risk or premium
is linear on a set of co-monotone random variables; in particular
it is linear on the following admissible space of solutions
\end{rem}
\[
\mathcal{A}=\{f(X_{0})|f\in C\}.
\]
If we denote the function $f\mapsto a_{1}\Lambda_{1}(X_{0}-f(X_{0}))+a_{2}\Lambda_{2}(f(X_{0}))$
by $\Gamma(f)$ then we have 
\begin{equation}
\Gamma\left(\sum\limits _{i=1}^{n}\gamma_{i}f_{i}\right)=\sum\limits _{i=1}^{n}\gamma_{i}\Gamma\left(f_{i}\right),\label{eq:linearity}
\end{equation}
where $\gamma_{i}\ge0$, $\sum\limits _{i=1}^{n}\gamma_{i}=1$ and
$f_{i}\in C$, for $i=1,..,n$.

It is known if a function $f$ is Lipschitz continuous, it is almost
everywhere differentiable and its derivative is essentially bounded
by its Lipschitz constant. Therefore, function $f$ can be written
as the integral of its derivative. As a result $C$ can be represented
as

\[
C=\left\{ f:\mathbb{R}_{+}\to\mathbb{R}_{+}\Big|f(x)=\int_{0}^{x}h(t)dt,0\le h\le1\right\} .
\]
We introduce the space of \textit{marginal indemnification functions}
as

\[
D=\left\{ h:\mathbb{R}_{+}\to\mathbb{R}_{+}\Big|0\le h\le1\right\} .
\]

\begin{defn}
For any indemnification function $f\in C$, the associated marginal
indemnification is a function $h\in D$ such that 
\[
f(x)=\int_{0}^{x}h(t)dt,x\ge0.
\]

\end{defn}
The interpretation of a marginal indemnification function is as follows:
if $f(x)=\int_{0}^{x}h(t)dt$ is a contract, then at each value $X_{0}=x$,
a marginal change $\delta$ to the value of the global loss will result
in marginal change of the cedant risk at the size $\delta h(x)$.
We will see in the following that in our framework the marginal change
of an optimal contract is either $0$ or $\delta$, i.e., $h=0\text{ or }1$.

\section{Optimal Solutions}

In this section, we restrict our attention to a family of distortion
risk measures and premiums which satisfy the following regularity
condition

\begin{equation}
\lim_{n\to\infty}\Lambda_{i}(X\wedge n)=\Lambda_{i}(X),i=1,2.\label{eq:continuity}
\end{equation}
Introduce $\Psi_{X_{0}}$ and $h^{*}$ as follows

\begin{equation}
\Psi(t):=(a_{2}-a_{1})-\left(a_{2}\Pi_{2}\left(t\right)-a_{1}\Pi_{1}\left(t\right)\right),\label{eq:Psi}
\end{equation}
and

\begin{equation}
k^{*}(t)=\left\{ \begin{array}{lll}
0 & \Psi(t)>0\\
1 & \Psi(t)<0\\
\tilde{k}(t) & \text{ otherwise}
\end{array}\right.,\label{eq:h*}
\end{equation}
where $\tilde{k}$ could be any function between $0$ and $1$ on
$\Psi_{X_{0}}=0$. Here we state our main result 
\begin{thm}
If $\Lambda_{1}$ and $\Lambda_{2}$ satisfy \eqref{eq:continuity},
the solutions to the general optimization problem \eqref{eq:general_form_moral}
is given by $f(x)=\int_{0}^{x}k^{*}(\mathrm{VaR}_{t}(X_{0}))dt$,
where $k^{*}$ is given by \eqref{eq:h*} and \eqref{eq:Psi}. The
value at minimum is also given by 
\begin{equation}
a_{1}\Lambda_{1}(X_{0})-\int_{0}^{\infty}\Psi_{X_{0}}^{-}(t)dt.\label{EQ:MINIMUM}
\end{equation}
\end{thm}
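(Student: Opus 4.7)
The approach is to reduce the infinite-dimensional optimization over $C$ to a pointwise minimization over the marginal indemnification $h\in D$. Two ingredients drive the reduction: the commutation of $\mathrm{VaR}$ with the non-decreasing maps $f$ and $\mathrm{id}-f$ that arise when $f\in C$, and a Fubini interchange which rewrites the objective as a single integral against $h$.

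First, applying representation \eqref{eq:Choquet_form} to both $f(X_0)$ and $X_0-f(X_0)$, setting $V_t:=\mathrm{VaR}_t(X_0)$ and $\nu:=a_2\Pi_2 - a_1\Pi_1$ (a signed Borel measure on $[0,1]$), this yields
\begin{equation*}
a_1\Lambda_1(X_0-f(X_0)) + a_2\Lambda_2(f(X_0)) = a_1\Lambda_1(X_0) + \int_0^1 f(V_t)\,d\nu(t).
\end{equation*}
Using $f(V_t) = \int_0^\infty h(u)\mathbf{1}_{\{V_t > u\}}\,du$ and interchanging the order of integration (applied separately to the Jordan components of $\nu$),
\begin{equation*}
\int_0^1 f(V_t)\,d\nu(t) = \int_0^\infty h(u)\, \nu\bigl(\{t\in[0,1]:V_t>u\}\bigr)\,du.
\end{equation*}
Since $t\mapsto V_t$ is non-decreasing, $\{t : V_t > u\}=(F_{X_0}(u),1]$, and a direct evaluation gives $\nu((F_{X_0}(u),1])=\Psi(F_{X_0}(u))=:\Psi_{X_0}(u)$, so the problem reduces to
\begin{equation*}
\min_{0\le h\le 1}\;\;a_1\Lambda_1(X_0) + \int_0^\infty h(u)\,\Psi_{X_0}(u)\,du.
\end{equation*}

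The reduced functional is linear in $h$ with only the pointwise constraint $0\le h\le 1$, so it is minimized pointwise: take $h^*(u)=1$ on $\{\Psi_{X_0}<0\}$, $h^*(u)=0$ on $\{\Psi_{X_0}>0\}$, and any value in $[0,1]$ on $\{\Psi_{X_0}=0\}$. This is precisely $h^*(u)=k^*(F_{X_0}(u))$, giving the optimizer $f^*(x)=\int_0^x k^*(F_{X_0}(u))\,du$. Substituting $h^*$ back into the reduced objective produces
\begin{equation*}
a_1\Lambda_1(X_0) + \int_{\{\Psi_{X_0}<0\}}\Psi_{X_0}(u)\,du = a_1\Lambda_1(X_0) - \int_0^\infty \Psi_{X_0}^-(u)\,du,
\end{equation*}
matching \eqref{EQ:MINIMUM}.

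The main technical obstacle is that $X_0$ may be unbounded, so $V_t$ can blow up as $t\to 1$, and the Fubini interchange together with the signed-measure manipulation in the first step require care regarding integrability. I would first carry out the whole argument for the truncated losses $X_0\wedge n$ (for which $V_t\le n$ and classical Fubini clearly applies), then pass to the limit $n\to\infty$ using the regularity hypothesis \eqref{eq:continuity}. This hypothesis is exactly what ensures that each $\Lambda_i$ evaluated at the truncated admissible or optimal contract converges to its value on the untruncated one, closing the argument on both sides of the identity.
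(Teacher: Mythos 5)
Your proposal is correct and follows essentially the same route as the paper: commute $\mathrm{VaR}$ with the monotone maps, substitute $f(x)=\int_0^x h(t)\,dt$, interchange the order of integration to obtain the linear functional $a_1\Lambda_1(X_0)+\int_0^\infty \Psi(F_{X_0}(u))h(u)\,du$, minimize pointwise over $0\le h\le 1$, and handle unbounded $X_0$ by truncating to $X_0\wedge n$ and invoking \eqref{eq:continuity} with monotone convergence. Packaging $a_2\Pi_2-a_1\Pi_1$ as a single signed measure and identifying $\{t:\mathrm{VaR}_t(X_0)>u\}=(F_{X_0}(u),1]$ is only a cosmetic repackaging of the paper's Fubini step.
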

\begin{rem}
It is very important to see that $k^{*}$ only depends on market preferences
and premiums, and therefore, it is universal. Also one can see how
the role of the total risk and the market preferences are separated.\end{rem}
\begin{proof}
Since $f(x)$ and $x-f(x)$ are both non-decreasing, and also since
$\mathrm{VaR}_{t}$ commute with monotone functions, we get 
\begin{multline}
a_{1}\Lambda_{1}\left(X_{0}-f(X_{0})\right)+a_{2}\Lambda_{2}\left(f(X_{0})\right)\\
=a_{1}\int_{0}^{1}\mathrm{VaR}_{t}\left(X_{0}-f(X_{0})\right)d\Pi_{1}(t)+a_{2}\int_{0}^{1}\mathrm{VaR}_{t}\left(f(X_{0})\right)d\Pi_{2}(t)\\
=a_{1}\int_{0}^{1}\left(\mathrm{VaR}_{\alpha}(X_{0})-f(\mathrm{VaR}_{t}(X_{0}))\right)d\Pi_{1}(t)+a_{2}\int_{0}^{1}f\left(\mathrm{VaR}_{t}(X_{0})\right)d\Pi_{2}(t).\label{eq:eq_tum_2}
\end{multline}
Using the representation we have already introduced for the members
of $C$ in terms of the members of $D$, there exists $h\in D$ such
that $f(x)=\int_{0}^{x}h(t)dt$. Therefore, we have 
\begin{multline}
a_{1}\Lambda_{1}(X_{0}-f(X_{0}))+a_{2}\Lambda_{2}(f(X_{0}))\\
=a_{1}\int_{0}^{1}\mathrm{VaR}_{t}(X_{0})d\Pi_{1}(t)-a_{1}\int_{0}^{1}\int_{0}^{\mathrm{VaR}_{t}(X_{0})}h(s)dsd\Pi_{1}(t)+a_{2}\int_{0}^{1}\int_{0}^{\mathrm{VaR}_{t}(X_{0})}h(s)dsd\Pi_{2}(t).\label{eq:eq_tum_3}
\end{multline}
First, we assume $X_{0}$ is bounded. By Fubini's Theorem, \eqref{eq:eq_tum_3}
gives
\begin{multline}
a_{1}\Lambda_{1}(X_{0}-f(X_{0}))+a_{2}\Lambda_{2}(f(X_{0}))\\
=a_{1}\Lambda_{1}(X_{0})+\int_{0}^{\infty}\left(a_{2}\int_{F_{X_{0}}(t)}^{1}d\Pi_{2}(s)-a_{1}\int_{F_{X_{0}}(t)}^{1}d\Pi_{1}(s)\right)h(t)dt\\
=a_{1}\Lambda_{1}(X_{0})+\int_{0}^{\infty}\left(a_{2}\left(\Pi_{2}(1)-\Pi_{2}\left(F_{X_{0}}(s)\right)\right)-a_{1}\left(\Pi_{1}(1)-\Pi_{1}\left(F_{X_{0}}(s)\right)\right)\right)h(s)ds\\
=a_{1}\int_{0}^{1}\mathrm{VaR}_{s}\left(X_{0}\right)d\Pi_{1}(s)+\int_{0}^{\infty}\Psi\left(F_{X_{0}}(s)\right)h(s)ds,\label{eq:eq_tum_4}
\end{multline}
where in the last line we use the fact that $\Pi_{1}(1)=\Pi_{2}(1)=1$.
It is clear that the following $h^{*}$ will minimize \eqref{eq:eq_tum_4}
\[
h^{*}(s)=\left\{ \begin{array}{lll}
0 & \Psi\left(F_{X_{0}}(s)\right)>0\\
1 & \Psi\left(F_{X_{0}}(s)\right)<0\\
\tilde{h}(s) & \text{ otherwise}
\end{array}\right.,
\]
where $\tilde{h}$ could be any function between zero and one on $\Psi\left(F_{X_{0}}(s)\right)=0$.
Since we are free to choose the values of $h^{*}$ on $\Psi=0$, we
may equalize it either to 0 or 1. The value of the minimum also is
equal to 
\[
a_{1}\Lambda_{1}(X_{0})-\int_{0}^{\infty}\Psi^{-}(t)dt.
\]
By a simple change of variable $t=F_{X_{0}}(s)$, we get the result
for bounded $X_{0}$.

Now let us in general assume that $X_{0}$ is not bounded. It is clear
that at each point $t$, $\left\{ \Pi_{i}\circ F_{X_{0}\wedge n}(t)\right\} _{n=1,2,...},i=1,2$
are non-increasing with respect to $n$. On the other hand, for any
$t$, there exist $n_{t}$ such that if $n>n_{t}$ then $F_{X_{0}\wedge n}(t)=F_{X_{0}}(t)$.
Therefore, for any $t$, we have that $\Pi_{i}(F_{X_{0}\wedge n}(t))\downarrow\Pi_{i}(F_{X_{0}}(t)),i=1,2$.
Now by Monotone Convergence Theorem we have that 
\[
\lim\limits _{n\to\infty}\int\limits _{0}^{\infty}\Pi_{i}\left(F_{X_{0}\wedge n}(t)\right)h(t)dt=\int\limits _{0}^{\infty}\Pi_{i}\left(F_{X_{0}}(t)\right)h(t)dt,i=1,2,
\]
for any function $h\in D$. Using this fact, our continuity assumption
\eqref{eq:continuity} and that $f$ is non-decreasing we have 
\begin{align*}
\Lambda_{i}\left(f(X_{0})\right) & =\lim_{n\to\infty}\Lambda_{i}\left(f(X_{0})\wedge f(n)\right)\\
 & =\lim_{n\to\infty}\Lambda_{i}\left(f(X_{0}\wedge n)\right)\\
 & =\lim\limits _{n\to\infty}\int_{0}^{\infty}\left(\Pi_{i}(1)-\Pi_{i}(F_{X_{0}\wedge n}(t))\right)h(t)dt\\
 & =\int_{0}^{\infty}\left(\Pi_{i}(1)-\Pi_{i}(F_{X_{0}}(t))\right)h(t)dt,
\end{align*}
for $i=1,2$. This simply results in 
\begin{multline*}
a_{1}\Lambda_{1}(f(X_{0}))+a_{2}\Lambda_{2}(X_{0}-f(X_{0}))\\
=a_{1}\Lambda_{1}(X_{0})+\int_{0}^{\infty}\left(a_{2}\left(\Pi_{2}(1)-\Pi_{2}\left(F_{X_{0}}(t)\right)\right)-a_{1}\left(\Pi_{1}(1)-\Pi_{1}\left(F_{X_{0}}(t))\right)\right)\right)h(t)dt
\end{multline*}
The rest of the proof follows the same lines after \eqref{eq:eq_tum_3}.
\end{proof}

\section{Corollaries and Examples }

In this section, we use the theory we have developed in last sections
to find the optimal solutions for particular cases. However before
that, we consider further assumptions. 

First, we assume that the cumulative distribution function $F_{X_{0}}$
and the distortion function $\Pi_{2}$ are strictly increasing; for
instance when $X_{0}$ is the value of a compound Poisson process
with exponential claims at time $T$, and $\pi$ is expectation or
Wang's premium. On the other hand, we assume $a_{1}=1$ and $a_{2}=1+\rho$,
when $\rho$ is a relative safety load. In the literature when $\mathrm{VaR_{\alpha}}$
or $\mathrm{CVaR_{\alpha}}$ is used, it is assumed usually that $\alpha(1+\rho)\le1$.
Given that $\alpha$ is always a number very close to $1$, (usually
$\alpha\in[0.9,0.99]$), it means that $\rho$ has to be small, precisely,
smaller than $\frac{1-\alpha}{\alpha}$. In the following, we assume
a different assumption that $\frac{\rho}{1+\rho}<\Pi_{2}(\alpha).$
Note that if $\rho$ is a small enough this assumption always holds.
Let $d^{*}$ and $a^{*}$ be two real numbers such that $F_{X_{0}}\left(d^{*}\right)=\frac{\rho}{1+\rho}$
and $\Pi_{2}\left(F_{X_{0}}\left(a^{*}\right)\right)=\frac{\rho}{1+\rho}=F_{X_{0}}\left(d^{*}\right)$
and let $L=F_{X_{0}}^{-1}(\alpha)-a^{*}$.
\begin{cor}
\label{cor:1}If we let $\Lambda_{1}=\mathrm{VaR}_{\alpha}$, then
the optimal solution for the ceding company is a stop-loss contract
described as
\[
X^{R}=\left\{ \begin{array}{lll}
0 & X_{0}\le a^{*}\\
X_{0}-L & a^{*}<X_{0}<a^{*}+L\\
a^{*} & X_{0}\ge a^{*}+L
\end{array}\right..
\]
\end{cor}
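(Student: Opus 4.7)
The plan is to specialize Theorem 2 to the ceding problem with $\Lambda_{1}=\mathrm{VaR}_{\alpha}$ (so that $\Pi_{1}(t)=\mathbf{1}_{[\alpha,1]}(t)$), $a_{1}=1$, $\Lambda_{2}=\pi_{\Pi_{2}}$, and $a_{2}=1+\rho$. Substituting these choices into \eqref{eq:Psi} gives
\[
\Psi(t)=\rho-(1+\rho)\Pi_{2}(t)+\mathbf{1}_{[\alpha,1]}(t),
\]
and Theorem 2 then expresses an optimal indemnification as $f(x)=\int_{0}^{x}h^{*}(s)\,ds$ with $h^{*}(s)=k^{*}(F_{X_{0}}(s))$. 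The corollary thus reduces to a sign analysis of $\Psi\circ F_{X_{0}}$ on $\mathbb{R}_{+}$ and a subsequent integration.

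Next I would split $\mathbb{R}_{+}$ at the threshold $F_{X_{0}}^{-1}(\alpha)$. On $[0,F_{X_{0}}^{-1}(\alpha))$ the indicator vanishes, so $\Psi(F_{X_{0}}(s))=\rho-(1+\rho)\Pi_{2}(F_{X_{0}}(s))$; because $\Pi_{2}$ and $F_{X_{0}}$ are both strictly increasing, this has a unique zero at $s=a^{*}$ characterized by $\Pi_{2}(F_{X_{0}}(a^{*}))=\rho/(1+\rho)$. The standing assumption $\rho/(1+\rho)<\Pi_{2}(\alpha)$ guarantees $a^{*}<F_{X_{0}}^{-1}(\alpha)$, so $\Psi\circ F_{X_{0}}$ is strictly positive on $[0,a^{*})$ and strictly negative on $(a^{*},F_{X_{0}}^{-1}(\alpha))$. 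On $[F_{X_{0}}^{-1}(\alpha),\infty)$ the indicator equals $1$ and
\[
\Psi(F_{X_{0}}(s))=(1+\rho)\bigl(1-\Pi_{2}(F_{X_{0}}(s))\bigr)\geq 0,
\]
so $h^{*}$ vanishes on this region as well.

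Combining the three regions yields $h^{*}=\mathbf{1}_{[a^{*},\,a^{*}+L)}$ with $L=F_{X_{0}}^{-1}(\alpha)-a^{*}$, and integrating produces the limited stop-loss (layer) contract: $X^{R}$ is $0$ for $X_{0}\leq a^{*}$, grows linearly on the middle interval, and saturates at the cap $L$ once $X_{0}\geq a^{*}+L$, which is the contract announced in the statement. The only delicate point, and the main place where the hypotheses are actually used, is the behavior of $\Psi$ across the jump of $\Pi_{1}$ at $\alpha$: the jump of size $1$ added to $\Psi$ at $F_{X_{0}}^{-1}(\alpha)$ could in principle flip the sign back and create further intervals on which $h^{*}=1$, and the assumption $\rho/(1+\rho)<\Pi_{2}(\alpha)$ together with the strict monotonicity of $\Pi_{2}$ and $F_{X_{0}}$ is exactly what prevents that and pins the upper end of the active region precisely at $F_{X_{0}}^{-1}(\alpha)$.
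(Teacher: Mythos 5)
Your proof is correct and follows essentially the same route as the paper: specialize $\Psi$ with $\Pi_{1}=1_{[\alpha,1]}$, analyze the sign of $\Psi\circ F_{X_{0}}$ on the two regions separated by $F_{X_{0}}^{-1}(\alpha)$ (using $\rho/(1+\rho)<\Pi_{2}(\alpha)$ to place $a^{*}$ below that threshold and the identity $\rho+1-(1+\rho)\Pi_{2}=(1+\rho)(1-\Pi_{2})\ge0$ above it), and integrate $h^{*}=1_{(a^{*},a^{*}+L)}$. The layer contract you obtain ($X^{R}=X_{0}-a^{*}$ on the middle interval, capped at $L$) matches the paper's own proof; the displayed formula in the corollary statement appears to have $a^{*}$ and $L$ interchanged.
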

\begin{proof}
We know that since $\Lambda_{1}=\mathrm{VaR}_{\alpha}$, $\Pi_{1}(t)=1_{[\alpha,1]}(t)$.
Therefore,
\[
\Psi(t)=\left\{ \begin{array}{ll}
\rho-(1+\rho)\Pi_{2}(F_{X_{0}}(t)) & t<F_{X_{0}}^{-1}(\alpha)\\
\rho+1-(1+\rho)\Pi_{2}(F_{X_{0}}(t)) & t\ge F_{X_{0}}^{-1}(\alpha)
\end{array}\right.,
\]
The second line in definition of $\Psi$ above is clearly non-negative.
The first line is non-negative if $\rho-(1+\rho)\Pi_{2}(F_{X_{0}}(t))\ge0$
and $t<F_{X_{0}}^{-1}(\alpha).$ Given that $\frac{\rho}{1+\rho}=\Pi_{2}\left(F_{X_{0}}\left(a^{*}\right)\right)$,
this is equivalent to say that $t<a^{*}$ and $t<F_{X_{0}}^{-1}(\alpha)$;
or in sum 
\[
t<\min\left\{ a^{*},F_{X_{0}}^{-1}(\alpha)\right\} .
\]
On the other hand, since by assumption $\Pi_{2}(F_{X_{0}}(a^{*}))=\frac{\rho}{1+\rho}<\Pi_{2}(\alpha)$,
then $a^{*}<F_{X_{0}}^{-1}(\alpha)$. This implies that $a^{*}=\min\left\{ a^{*},F_{X_{0}}^{-1}(\alpha)\right\} $.
Hence, $\Psi$ is non-positive on interval $\left(a^{*},F_{X_{0}}^{-1}(\alpha)\right)=\left(a^{*},a^{*}+L\right)$.
Therefore, $h^{*}$ is given as 
\[
h^{*}(t)=\left\{ \begin{array}{ll}
1 & a^{*}<t<a^{*}+L\\
0 & \text{ otherwise}
\end{array}\right..
\]
By integrating $h$ over $t$,
\[
f^{*}(x)=\left\{ \begin{array}{lll}
0 & x\le a^{*}\\
x-a^{*} & a^{*}<x<a^{*}+L\\
L & x\ge a^{*}+L
\end{array}.\right.
\]
\end{proof}
\begin{rem}
In particular if $\pi$ is expectation then $a^{*}=d^{*}$, and our
result is consistent with the existing results in the literature.\end{rem}
\begin{cor}
If $\varrho=\mathrm{CVaR}_{\alpha}$ and $\Pi_{2}$ is convex then
the optimal policy is a stop-loss policy described as\textup{
\[
X^{R}=\left\{ \begin{array}{lll}
0 & X_{0}\le a^{*}\\
X_{0}-a^{*} & a^{*}<X_{0}<a^{*}+L^{*}\\
L & X_{0}\ge a^{*}+L^{*}
\end{array},\right.
\]
}where $L^{*}$ is a number greater than $L$.\end{cor}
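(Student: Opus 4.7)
The plan is to specialize Theorem~2 to $\Lambda_1 = \mathrm{CVaR}_\alpha$ (so $a_1 = 1$ and $\Pi_1(u) = \frac{u-\alpha}{1-\alpha}\mathbf{1}_{[\alpha,1]}(u)$) and $\Lambda_2 = \pi$ with $a_2 = 1+\rho$ and $\Pi_2$ convex, and then read off the optimal contract from the sign of $\Psi \circ F_{X_0}$, exactly in the style of Corollary~\ref{cor:1}. Substituting these data into the definition of $\Psi$ yields the piecewise expression
\[
\Psi(F_{X_0}(t)) =
\begin{cases}
\rho - (1+\rho)\Pi_2(F_{X_0}(t)), & t < F_{X_0}^{-1}(\alpha),\\[2pt]
\rho + \dfrac{F_{X_0}(t)-\alpha}{1-\alpha} - (1+\rho)\Pi_2(F_{X_0}(t)), & t \ge F_{X_0}^{-1}(\alpha).
\end{cases}
\]

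On the lower region $t < F_{X_0}^{-1}(\alpha)$ the reasoning is identical to that in Corollary~\ref{cor:1}: by monotonicity of $\Pi_2 \circ F_{X_0}$ and the defining relation $\Pi_2(F_{X_0}(a^{*})) = \frac{\rho}{1+\rho}$, one obtains $\Psi \circ F_{X_0} > 0$ on $[0, a^{*})$ and $\Psi \circ F_{X_0} < 0$ on $(a^{*}, F_{X_0}^{-1}(\alpha))$, which delivers the transition of $h^{*}$ from $0$ to $1$ at $a^{*}$.

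For the upper region I would introduce $\phi(F) := \rho + \frac{F-\alpha}{1-\alpha} - (1+\rho)\Pi_2(F)$ on $F \in [\alpha,1]$, so that $\Psi(F_{X_0}(t)) = \phi(F_{X_0}(t))$ for $t \ge F_{X_0}^{-1}(\alpha)$. Here the hypothesis enters essentially: convexity of $\Pi_2$ forces $\phi$ to be concave (affine minus convex). Evaluating at the endpoints gives $\phi(\alpha) = \rho - (1+\rho)\Pi_2(\alpha) < 0$ (strict by the standing assumption $\frac{\rho}{1+\rho} < \Pi_2(\alpha)$) and $\phi(1) = \rho + 1 - (1+\rho) = 0$. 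A concave function that is strictly negative at the left endpoint and exactly zero at the right endpoint must cross zero from below at some unique $F^{**} \in (\alpha, 1]$ and remain non-negative on $[F^{**},1]$. Setting $a^{*}+L^{*} := F_{X_0}^{-1}(F^{**})$, strict monotonicity of $F_{X_0}$ together with $F^{**} > \alpha$ immediately yields $L^{*} > L$, while $\Psi \circ F_{X_0} < 0$ on $[F_{X_0}^{-1}(\alpha), a^{*}+L^{*})$ and $\Psi \circ F_{X_0} \ge 0$ on $[a^{*}+L^{*}, \infty)$.

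Concatenating the two regions, the optimal marginal indemnification function is $h^{*} = \mathbf{1}_{(a^{*}, a^{*}+L^{*})}$, whose primitive is precisely the stop-loss function displayed in the statement. The main technical step is the concavity argument for $\phi$ in the upper region: it is what \emph{extends} the active interval beyond $F_{X_0}^{-1}(\alpha)$ and produces the strict inequality $L^{*} > L$ announced in the corollary. Without convexity of $\Pi_2$ one could not guarantee a single crossing of $\phi$ through zero, and the optimal contract would not necessarily retain the clean stop-loss form.
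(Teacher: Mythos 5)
Your proof is correct and follows essentially the same route as the paper: compute $\Psi$ piecewise, reuse the Corollary~\ref{cor:1} analysis below $F_{X_{0}}^{-1}(\alpha)$, and use the curvature of the second branch together with the endpoint values $\phi(\alpha)<0$, $\phi(1)=0$ to obtain a single sign change in the tail. One point in your favour: you correctly identify $\phi(F)=\rho+\frac{F-\alpha}{1-\alpha}-(1+\rho)\Pi_{2}(F)$ as \emph{concave} (affine minus convex) and invoke the interval structure of its superlevel set, whereas the paper's proof calls this branch ``convex''---a slip, since a genuinely convex $\phi$ with $\phi(\alpha)<0$ and $\phi(1)=0$ would be nonpositive on all of $[\alpha,1]$; your concavity argument is the one that actually delivers the unique crossing at $b^{*}$ and the strict inequality $L^{*}>L$.
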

\begin{proof}
In this case $\Pi_{1}(x)=\frac{x-\alpha}{1-\alpha}1_{[\alpha,1]}$
and therefore,
\[
\Psi(t)=\left\{ \begin{array}{ll}
\rho-(1+\rho)\Pi_{2}(F_{X_{0}}(t)) & F_{X_{0}}(t)<\alpha\\
\rho+\frac{F_{X_{0}}(t)-\alpha}{1-\alpha}-(1+\rho)\Pi_{2}(F_{X_{0}}(t)) & F_{X_{0}}(t)\ge\alpha
\end{array}\right..
\]
To discover the structure of $h^{*}$, we have to see when $\Psi$
is non-negative. The analysis is very similar to the Corollary \ref{cor:1},
except that, we need to find out when the second line in the definition
of $\Psi$ is non-negative. First of all, observe that at $F_{X_{0}}(t)=\alpha$,
the second line in the definition of $\Psi$ becomes $\rho-(1+\rho)\Pi_{2}(\alpha)$,
which by assumption $F_{X_{0}}(d^{*})<\Pi_{2}(\alpha)$, yields $\Psi(F^{-1}(\alpha))<0$.
This shows that in the area $\{t:F_{X_{0}}(t)\ge\alpha\}$, $\Psi$
can be negative (unlike the previous corollary). Since $\Pi_{2}$
is convex, the second line in the definition of $\Psi$ is a convex
function of $F_{X_{0}}(t)$ which is zero at $F_{X_{0}}(t)=1$. Since
we have shown $\Psi$ is negative at $F_{X_{0}}(t)=\alpha$, we infer
that there exists a solution $b^{*}$ to $\rho+\frac{F_{X_{0}}(t)-\alpha}{1-\alpha}-(1+\rho)\Pi_{2}(F_{X_{0}}(t))=0$,
strictly greater than $F_{X_{0}}^{-1}(\alpha)=a^{*}+L$, such that
$\Psi$ is non-negative between $b^{*}$ and $1$. Therefore, 
\[
h(t)=\left\{ \begin{array}{ll}
1 & a^{*}<F_{X_{0}}(t)<b^{*}\\
0 & \text{ otherwise}
\end{array}\right..
\]
This shows that, similar to the case that the ceding company's risk
measure is $\mathrm{VaR}_{\alpha}$, the optimal reinsurance for $\mathrm{CVaR_{\alpha}}$
is again stop-loss, but with a greater liability $L^{*}=b^{*}-a^{*}$. \end{proof}
\begin{rem}
In particular, if $\pi$ is expectation, we have $b^{*}=F_{X_{0}}^{-1}\left(\frac{\alpha+(1-\alpha)\rho}{1-(1-\alpha)(1+\rho)}\right)$.\end{rem}
\begin{example}
Let us again consider the ceding problem. Assume that the ceding risk
measure is $\mathrm{CVaR_{\alpha}}$ and the risk premium for the
reinsurance company is a Wang's premium given as in \eqref{eq:Wang}.
Again we fix a relative safety load. One can find $\Psi$ as follows
\[
\Psi(t):=\rho+\left(\frac{F_{X_{0}}(t)-\alpha}{1-\alpha}1_{[\alpha,1]}\left(F_{X_{0}}(t)\right)-(1+\rho)\left(1-\Phi(\Phi^{-1}(1-F_{X_{0}}(t))+\beta)\right)\right).
\]

In the Figures 1 and 2 we have depicted $\Psi\circ F_{X_{0}}^{-1}$
and $k=h^{**}\circ F_{X_{0}}^{-1}$, respectively, for different scenarios.
As one can see in all case we have a policy $h^{*}$ associated with
a stop-loss policy. From the Figure 1, one can see that if $\beta$
is small, it is more likely that the policy becomes a degenerate policy,
by transferring the whole risk to the reinsurance company, for all
levels of risk aversion. On the other hand, by looking at the Figure
2, with a fixed $\beta$, one can see that the higher the level of
risk aversion is the higher the retention level is. Even at level
$\alpha=0.99$ one can see that the policy is a degenerate policy,
and all the risk is transferred to the reinsurance company. 
\begin{figure}[H]
\includegraphics[scale=0.65]{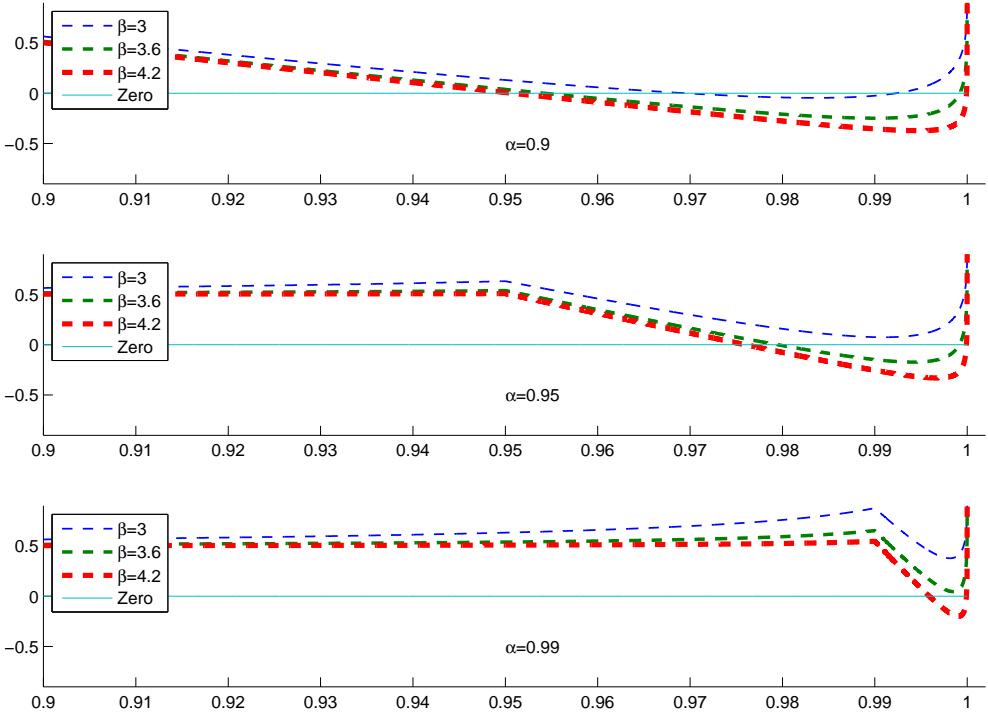}\caption{The function $\Psi$ for $\rho=0.5$ and different $\beta$ and $\alpha$. }
\end{figure}

\end{example}
\begin{figure}[H]
\includegraphics[scale=0.65]{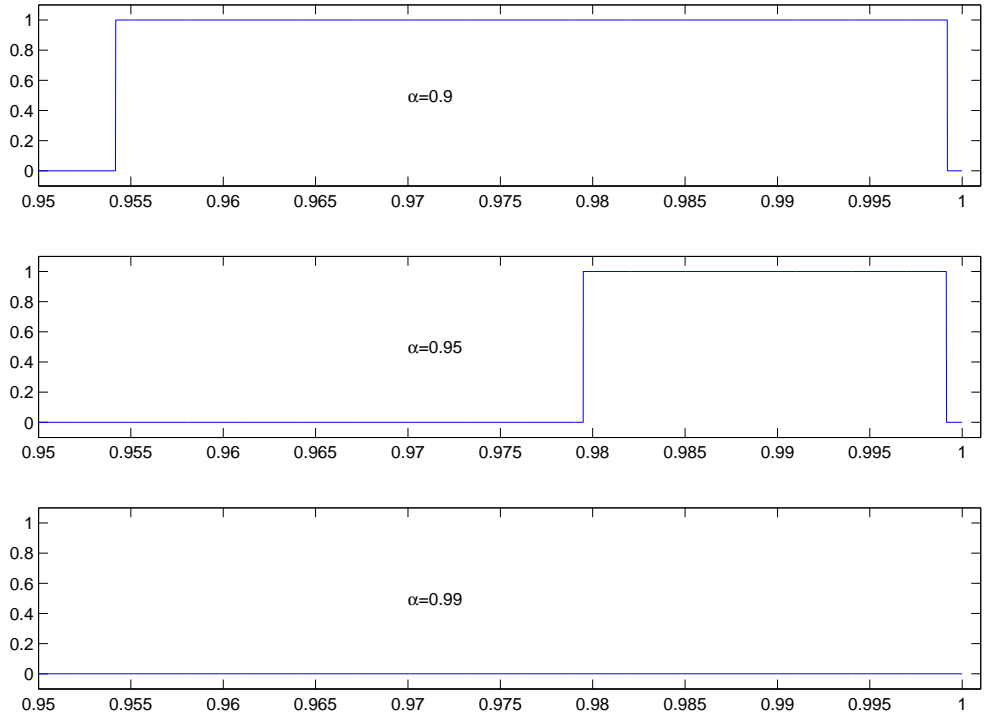}\caption{The marginal indemnification function $k^{*}$, for parameters $\beta=3.6$
and $\rho=0.5$.}
\end{figure}

\begin{example}
Let us consider that the ceding company uses $\mathrm{VaR}_{\alpha}$
and the reinsurance company $\mathrm{VaR}_{\beta}$, when $\alpha<\beta$.
\[
\Psi(t)=\left\{ \begin{array}{lll}
\rho & F_{X_{0}}(t)<\alpha\\
\rho+1 & \alpha\le F_{X_{0}}(t)<\beta\\
0 & F_{X_{0}}(t)\ge\beta
\end{array}\right..
\]
This shows that $f(x)=0$, and therefore, the ceding company should
not transfer any part of her risk to the reinsurance company. In the
opposite direction if $\alpha>\beta$, the same is true for the reinsurance
company, that means the reinsurance company has to accept the whole
risk.
\end{example}

\end{document}